\newtheorem{theorem}{Theorem}
\newtheorem{lemma}[theorem]{Lemma}
\newtheorem{remark}[theorem]{Remark}
\DeclareMathOperator{\cone}{cone}
\DeclareMathOperator{\diff}{d}
\DeclareMathOperator{\conv}{conv}
\DeclareMathOperator{\1}{1}
\newcommand{\fF}{\frak F}
\newcommand{\fG}{\frak G}
\newcommand{\fS}{\frak S}
\newcommand{\cM}{\mathscr M}
\newcommand{\calF}{\mathcal F}
\newcommand{\calG}{\mathcal G}
\newcommand{\calS}{\mathcal S}
\newcommand{\comment}[1]{}
\newcommand{\N}{\mathbb{N}}
\newcommand{\R}{\mathbb{R}}
\begin{document}

\title{A New Approach to Model Free Option Pricing}
\author[1]{Raphael Hauser\thanks{Andrew Wiles Building, Radcliffe Observatory Quarter, Woodstock Road, Oxford OX2 6GG,
United Kingdom, hauser@maths.ox.ac.uk}}
\author[2]{Sergey Shahverdyan\thanks{Andrew Wiles Building, Radcliffe Observatory Quarter, Woodstock Road, Oxford OX2 6GG,
United Kingdom, shahverdyan@maths.ox.ac.uk}}
\affil[1]{Mathematical Institute, University of Oxford}
\affil[2]{Mathematical Institute, University of Oxford}
\date{}
\maketitle

\abstract{}
In this paper we introduce a new approach to model-free path-dependent option pricing. We first introduce a general duality result for linear optimisation problems over signed measures introduced in \cite{hauser} and show how the the problem of model-free option pricing can be formulated in the new framework. We then introduce a model to solve the problem numerically when the only information provided is the market data of vanilla call or put option prices. Compared to the common approaches in the literature, e.g. \cite{hen-lab}, the model does not require the marginal distributions of the stock price for different maturities. Though the experiments are carried out for simple path-dependent options on a single stock, the model is easy to generalise for multi-asset framework.

\section{Introduction}\label{introduction}
In this paper we introduce a new approach on model-free pricing of discrete path-dependent options. There is a considerable amount of research done on the theory of the problem, but, to our knowledge, not much results are known for the numerical solution for the general case. This work is based on a general duality relation introduced in \cite{hauser}. This duality relation serves us twice in this work. First we show how the optimisation problem of finding model-free lower and upper bounds on path dependent options can be formulated in their framework, which allows one to get the dual problem straightaway from the primal.\\
Next, we apply the so-called 'Constraints via Bounds on Integrals' model to solve the optimisation problem numerically. This model allows one to construct constraints for marginal distributions, which can be obtained by calibrating the empirical risk-neutral distributions. But instead we use market data of European  call or put prices as constraints. This allows us to compute the tightest possible bounds in the model-free framework, i.e., given only the market information, any narrower bound has to consider constraints other than those provided by  the market data and that the stock price process is a martingale.

\section{Problem Formulation and Duality}\label{duality}
Consider an exotic path-dependent option that depends on the value of a stock price at some discrete times $t_1 < t_2 < \dots < t_n$. Let $X = (X_1, \dots, X_n)$ be a $n$-dimensional random variable denoting stock prices at times $(t_1, \dots, t_n)$ respectively, and let $h(X_1, \dots, X_n)$ be the payoff of the given option.\\
In practice, in a no-arbitrage framework one assumes that the stock price follows some specific model, one calibrates this model to market data and finds the fair price of an option, i.e., find a risk neutral probability measure $Q$ on $\R^n$ and calculate the price as
\begin{equation}\label{eq:optionPricingPrimalMain}
\mathbb E_Q[h] = \int_{\R^n}h(x_1, \dots, x_n)\diff Q(x_1, \dots, x_n).
\end{equation}
Under $Q$ the stock price process $(X_i)_{i = 1, \dots, n}$ is assumed to be a discrete martingale.\\
The problem of model-free option pricing is to find bounds on the prices of the option without assuming any particular model for the stock price process, i.e., find maximum and minimum values over all models that are consistent with the observed market data. Assuming knowledge of market call option prices, one can recover a distribution function $\calF_i$ of $X_i$ for every $i = 1, \dots, n$ (see, e.g., \cite{monteiro}), which will represent the marginal distributions of a risk neutral equivalent probability measure of $X$. In this framework, denoting by $\mathcal M_n(\calF_1, \dots, \calF_n)$ the set of martingale measures with marginals $\calF_1, \dots, \calF_n$, the problem becomes
\begin{equation}\label{MFOP:Primal}
\inf_{\calF\in\mathcal M_n(\calF_1, \dots, \calF_n)}\int_{\R^n}h(x_1, \dots, x_n)\diff \calF(x_1, \dots, x_n).
\end{equation}
We call the above problem the primal problem.\\
For any $i = 1, \dots, n$ define the set of $F_i$ integrable functions from $\R\rightarrow\R$ by $\mathcal H_i$ and for any $j = 1, \dots, n - 1$ the set of bounded measurable functions from $\R^j\rightarrow\R$ by $\mathcal T_j$. For any $h_i\in\mathcal H_i$, $i = 1, \dots, n$ and $g_j\in \mathcal T_j$, $j = 1, \dots, n - 1$ consider the function $\Psi_{(h_i),(g_j)}(x_1, \dots, x_n) : \R^n\rightarrow \R$ defined by
\begin{equation}
\Psi_{(h_i),(g_j)}(x_1, \dots, x_n) = \sum_{i = 1}^n h_i(x_i) + \sum_{j = 1}^{n-1}g_j(x_1, \dots, x_j)(x_{j+1} - x_j).
\end{equation}
Consider the following optimisation problem:
\begin{align}\label{MFOP:Dual}
\sup_{(h_i\in\mathcal H_i)_1^n,(g_j\in \mathcal T_j)_1^{n-1}}&\sum_{i = 1}^n\int_\R h_i(x_i)\diff \calF_i(x_i)\\
\text{s.t. }&\Psi_{(h_i),(g_j)}(x_1, \dots, x_n) \leq h(x_1, \dots, x_n).
\end{align}
We have the following theorem.\\
\begin{theorem}[Theorem 1, \cite{beig}]\label{beigTheorem1}
Assume $\calF_1, \dots, \calF_n$ are Borel probability measures on $\R$ so that $\mathcal M(\calF_1, \dots, \calF_n)$ is non-empty. Let $h:\R^n\rightarrow (-\infty, \infty]$ be a lower semi-continuous function so that
\begin{equation*}
h(x_1, \dots, x_n) \geq -K\cdot(1 + |x_1| + \dots + |x_n|),
\end{equation*}
on $\R^n$ for some constant $K$. Then the duality gap between the primal problem \eqref{MFOP:Primal} and the dual \eqref{MFOP:Dual} equals zero. Moreover, the primal optimal value is obtained, i.e. there exists a martingale measure $\calF\in\mathcal M(\calF_1, \dots, \calF_n)$ such that the optimal value of \eqref{MFOP:Primal} is equal to $\mathbb E_{\calF}[h]$.
\end{theorem}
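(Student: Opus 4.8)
The plan is to split the statement into two assertions---the vanishing of the duality gap (strong duality) and the attainment of the primal infimum---and to dispatch the trivial direction first. \emph{Weak duality} is immediate: for any feasible dual candidate $(h_i)_1^n,(g_j)_1^{n-1}$ and any $\calF\in\mathcal M(\calF_1,\dots,\calF_n)$, the martingale property forces $\int_{\R^n}g_j(x_1,\dots,x_j)(x_{j+1}-x_j)\diff\calF = 0$ for each $j$, so integrating the constraint $\Psi_{(h_i),(g_j)}\le h$ against $\calF$ collapses the trading terms and yields $\sum_i\int_\R h_i\diff\calF_i \le \int_{\R^n}h\diff\calF$. Taking the supremum on the left and the infimum on the right gives $\text{dual}\le\text{primal}$; everything beyond this is the reverse inequality together with attainment.

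For \emph{attainment} of the primal I would argue by weak compactness. The set $\mathcal M(\calF_1,\dots,\calF_n)$ is tight, since all its elements share the fixed marginals $\calF_i$, so Prokhorov's theorem gives relative weak compactness; it is also closed under weak convergence, because both the marginal conditions and the martingale conditions survive in the limit. The latter is the only subtle point: the martingale test integrands $g_j(x_1,\dots,x_j)(x_{j+1}-x_j)$ are unbounded, so passing them through a weak limit requires a uniform-integrability argument, supplied by the first-moment bounds that come for free from the fixed marginals. Combined with the lower semicontinuity of $\calF\mapsto\int h\diff\calF$---a Fatou-type consequence of the lower semicontinuity of $h$ and its affine minorant $-K(1+|x_1|+\dots+|x_n|)$---a minimising sequence then has a weak limit attaining the infimum.

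The heart of the matter, and the main obstacle, is the \emph{no-gap} assertion. I would first reduce to a bounded, Lipschitz-continuous payoff: a lower semicontinuous $h$ bounded below by the given affine function can be written as an increasing limit of continuous functions, and the growth bound lets one control the truncation error uniformly over $\calF\in\mathcal M(\calF_1,\dots,\calF_n)$, precisely because the total first moment $\int(|x_1|+\dots+|x_n|)\diff\calF$ is pinned down by the fixed marginals. For such well-behaved $h$ one obtains strong duality by a minimax interchange: writing the problem as $\inf_\calF\sup_{(g_j)}\int_{\R^n}\bigl(h - \sum_j g_j(x_{j+1}-x_j)\bigr)\diff\calF$ over probability measures with the prescribed marginals, the integrand is bilinear in $\calF$ and the $(g_j)$, and the measure set is convex and weak-$*$ compact, so Sion's minimax theorem permits exchanging $\inf_\calF$ and $\sup_{(g_j)}$; after the exchange the inner infimum is a classical multi-marginal Kantorovich transport problem whose dual supplies exactly the functions $h_i$, reassembling the constraint $\Psi_{(h_i),(g_j)}\le h$.

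The delicate points I expect to wrestle with all live in this last step: verifying the compactness and semicontinuity hypotheses needed to invoke the minimax theorem over the infinite-dimensional space of trading strategies $(g_j)$, and ensuring that the dual optimisers $h_i$ produced for the truncated payoffs can be passed to the limit without losing feasibility of $\Psi_{(h_i),(g_j)}\le h$. Taming the unboundedness of the martingale integrands $g_j(x_{j+1}-x_j)$---which carry no obvious a priori control---is exactly where the affine lower bound on $h$ and the fixed-marginal moment conditions must be brought to bear, and I expect this to be the technically demanding core of the argument.
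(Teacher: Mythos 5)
The paper does not contain a proof of this theorem---it is quoted from \cite{beig} and the proof is explicitly deferred to that reference---and your outline (weak duality by integrating the constraint against a martingale measure so the trading terms vanish, primal attainment via Prokhorov tightness from the fixed marginals plus lower semicontinuity of $\calF\mapsto\int h\,\diff\calF$, and the no-gap statement via reduction to bounded continuous payoffs, a minimax interchange over the weakly compact convex set of measures with prescribed marginals, and multi-marginal Kantorovich duality for the inner problem) is precisely the strategy of the proof given there. The delicate points you flag are resolved in \cite{beig} by exactly the devices you name---uniform integrability of the coordinate functions supplied by the fixed first moments of the marginals, and monotone approximation of the lower semicontinuous payoff combined with the fact that dual feasibility for the approximants implies dual feasibility for $h$---so your plan contains no missing idea, only deferred (and standard) technical execution.
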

A proof of the theorem can be found in $\cite{beig}$.\\
In order to solve the dual problem numerically, we need to make some discretisation of this infinite dimensional problem. One way of doing so (as is described in \cite{beig}) is to decompose the functions $(h_i)_1^n$ and $(g_j)_1^{n-1}$ over a finite dimensional basis. Since $(g_j)_1^{n-1}$ are assumed to be continuous, we can choose a polynomial basis, i.e.
\begin{equation*}
g_j(x_1, \dots, x_j) = \sum_{k}g_j^k e_j^k(x_1, \dots, x_j),
\end{equation*}
where $e_j^k$ are functions of the basis, and $g_j^k$ are the coordinates of $g_j$ with respect to the chosen basis.\\
For functions $(h_i)_1^n$ we can choose call option payoffs as basis
\begin{equation*}
h_i(x_i) = \sum_k h_i^k(x_i - k^b)^+.
\end{equation*}
With this approximation the dual problem reduces to a linear programming problem that can be solved via interior-point methods, the simplex algorithm or another algorithm of choice.\\
In this work we take a different approach to the numerical solution of the problem of arbitrage-free pricing. The approach is based on the model described in \cite{hauser}, which shall be presented in the next section.\\

\section{A General Duality Relation}\label{duality}
In this section we will see that the duality relation between \eqref{MFOP:Primal} and \eqref{MFOP:Dual} introduced by \cite{beig} in a special case of a general duality relation for mass transportation problems presented in \cite{hauser}, thus yielding an alternative derivation of \eqref{MFOP:Dual}.\\
First we will describe the duality results, and then will show how to obtain the duality results presented in the previous section using this framework. In this section we also present a model that lends itself to the numerical solution of the resulting infinite-dimensional optimisation problems, which have special structure.\\
Let $(\Phi,\fF)$, $(\Gamma,\fG)$ and $(\Sigma,\fS)$ 
be complete measure spaces, and let 
$A:\,\Gamma\times\Phi\rightarrow\R$, 
$a:\,\Gamma\rightarrow\R$, 
$B:\,\Sigma\times\Phi\rightarrow\R$, 
$b:\,\Sigma\rightarrow\R$, and 
$h:\,\Phi\rightarrow\R$ 
be bounded measurable functions on these spaces and the corresponding product 
spaces. Let $\cM_{\fF}$, $\cM_{\fG}$ and $\cM_{\fS}$ be the set of signed 
measures with finite variation on $(\Phi,\fF)$, $(\Gamma,\fG)$ and 
$(\Sigma,\fS)$ respectively.\\
Consider the following pair of optimisation problems over $\cM_{\fF}$ and 
$\cM_{\fG}\times\cM_{\fS}$ respectively, 
\begin{align}\label{eq:generalDualityPrimal}
\sup_{\calF\in\cM_{\fF}}\,&\int_{\Phi}h(x)\diff \calF(x)\\
\text{s.t. }&\int_{\Phi}A(y,x)\diff \calF(x)\leq a(y),\quad(y\in\Gamma),\nonumber\\
&\int_{\Phi}B(z,x)\diff \calF(x)= b(z),\quad(z\in\Sigma),\nonumber\\
&\calF\geq 0,\nonumber
\end{align}
and
\begin{align}\label{eq:generalDualityDual}
\inf_{(\calG,\calS)\in\cM_{\fG}\times\cM_{\fS}}
\,&\int_{\Gamma}a(y)\diff \calG(y)+\int_{\Sigma}b(z)\diff\calS(z),\\
\text{s.t. }&\int_{\Gamma}A(y,x)\diff\calG(y)+\int_{\Sigma}B(z,x)\diff\calS(z)\geq h(x),\quad(x\in\Phi),\nonumber\\
&\calG\geq 0.\nonumber
\end{align}
The infinite-programming problems \eqref{eq:generalDualityPrimal} and \eqref{eq:generalDualityDual} are duals of 
each other. 

\begin{theorem}[Weak Duality, Theorem 1, \cite{hauser}]\label{weak duality}
For every \eqref{eq:generalDualityPrimal}-feasible measure $\calF$ and every 
\eqref{eq:generalDualityDual}-feasible pair $(\calG,\calS)$ we have 
\begin{equation*}
\int_{\Phi}h(x)\diff\calF(x)\leq \int_{\Gamma}a(y)\diff\calG(y)+
\int_{\Sigma}b(z)\diff\calS(z).
\end{equation*}
\end{theorem}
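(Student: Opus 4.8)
The plan is to establish the inequality by a single chain of estimates, using nothing more than the feasibility constraints of the two problems together with Fubini's theorem to interchange orders of integration. The guiding principle is the usual weak-duality argument from finite-dimensional linear programming, transported to the measure-theoretic setting: I would pair the primal variable $\calF$ against the dual constraint, pair the dual variables $(\calG,\calS)$ against the primal constraints, and check that the sign conditions make every inequality point in the correct direction.

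First I would fix a \eqref{eq:generalDualityPrimal}-feasible $\calF$ and a \eqref{eq:generalDualityDual}-feasible pair $(\calG,\calS)$. Starting from the dual feasibility inequality and integrating both sides against $\diff\calF(x)$ — which preserves the inequality because $\calF\geq 0$ — yields
\begin{equation*}
\int_{\Phi}h(x)\diff\calF(x)\leq\int_{\Phi}\left(\int_{\Gamma}A(y,x)\diff\calG(y)+\int_{\Sigma}B(z,x)\diff\calS(z)\right)\diff\calF(x).
\end{equation*}
Next I would apply Fubini's theorem to each term on the right to swap the order of integration, rewriting the bound as
\begin{equation*}
\int_{\Gamma}\left(\int_{\Phi}A(y,x)\diff\calF(x)\right)\diff\calG(y)+\int_{\Sigma}\left(\int_{\Phi}B(z,x)\diff\calF(x)\right)\diff\calS(z).
\end{equation*}

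The remaining step is to invoke the primal constraints on the inner integrals. In the first term the inner integral is $\leq a(y)$, and since $\calG\geq 0$ this estimate survives integration against $\diff\calG(y)$, bounding the first term by $\int_{\Gamma}a(y)\diff\calG(y)$. In the second term the inner integral equals $b(z)$ exactly; this is precisely why $\calS$ may be an unrestricted signed measure, since an equality constraint transfers through integration against any signed $\calS$ without a sign requirement, giving $\int_{\Sigma}b(z)\diff\calS(z)$. Adding the two pieces produces the claimed inequality.

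The only real obstacle is justifying the interchange of integration: Fubini's theorem requires absolute integrability of $A(y,x)$ and $B(z,x)$ with respect to the product measures $\calG\times\calF$ and $\calS\times\calF$. Here I would lean on the standing hypotheses that $A$ and $B$ are bounded measurable and that $\calF,\calG,\calS$ have finite total variation; together these force the relevant double integrals to converge absolutely, so that Fubini applies after decomposing each signed measure into its positive and negative parts. Everything else is bookkeeping of signs, and the substantive content of the argument lies entirely in this integrability check.
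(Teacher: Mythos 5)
Your proof is correct and follows essentially the same route as the paper's: integrate the dual feasibility inequality against $\calF\geq 0$, apply Fubini to the product measures, and then invoke the primal constraints together with $\calG\geq 0$. You simply spell out the sign bookkeeping and the integrability hypotheses (boundedness of $A,B$ and finite variation of the measures) that the paper's two-line proof leaves implicit.
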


\begin{proof}
Using Fubini's Theorem, we have 
\begin{align*}
\int_{\Phi}h(x)\diff\calF(x)&\leq
\int_{\Gamma\times\Phi}A(y,x)\diff(\calG\times\calF)(y,x)+
\int_{\Sigma\times\Phi}B(z,x)\diff(\calS\times\calF)(z,x)\\
&\leq\int_{\Gamma}a(y)\diff\calG(y)+
\int_{\Sigma}b(z)\diff\calS(z).
\end{align*}
\end{proof}

In various special cases, such as those discussed in  \cite{hauser}, strong duality is known to hold subject to regularity assumptions, that is, the optimal values of \eqref{eq:generalDualityPrimal} and \eqref{eq:generalDualityDual} coincide. Another special case under  which strong duality applies is when the measures $\calF$, $\calG$ and $\calS$ have densities in appropriate Hilbert spaces, see the forthcoming DPhil thesis of the second author \cite{sergey}. 
Note also that the quantifiers in the constraints can be weakened if the set of allowable measures is restricted. For example, if $\calG$ is restricted to lie in a set of measures that are absolutely continuous with respect to a fixed measure $\calG_0\in\cM_{\fG}$, then the quantifier $(y\in\Gamma)$ can be weakened to $(\calG_0\text{-almost all }y\in\Gamma)$.

\subsection{Application of the General Duality Result to Model Free Option Pricing}
In this subsection we show that the duality result for no-arbitrage bounds of path-dependent options can also be obtained in this framework.\\
In the original Problem \eqref{MFOP:Primal} we have two types of constraints. The first type is the requirement of the joint probability measure $\calF$ to have marginal distributions $\calF_1, \dots, \calF_n$. The second type of the constraint is for $\calF$ to be a martingale measure.\\
The marginal constraints can be formulated as follows
\begin{equation}
\text{s.t. }\int_{\R}\1_{\{x_i\leq z\}}(z,x)\diff \calF(x)
=F_i(z),\quad(z\in\R,i = 1, \dots, n)\\
\end{equation}
For the martingale measure constraints, observe, that $\calF$ is a martingale measure if and only if for every $k = 1, \dots, n - 1$ and every Borel measurable set $\mathcal B\subseteq\R^k$
\begin{equation}
\int_{\R^n}\1_{\mathcal B}(x_1, \dots, x_k)(x_{k + 1} - x_k)\diff \calF(x_1, \dots, x_n) = 0,
\end{equation}
which is equivalent to (\cite{beig}, Lemma 2.3):
for every $z, y\in\R^k$ such that $z\leq y$
\begin{equation}\label{eq:martingaleEquivalent}
\int_{\R^n}\1_{z\leq x|_k \leq y}(x_1, \dots, x_k)(x_{k + 1} - x_k)\diff \calF(x_1, \dots, x_n) = 0,
\end{equation}
where by $x|_k$ we denote the first $k$ coordinates of $x$ and the comparison of the vectors is component-wise.\\
Now we are ready to formulate the Problem \eqref{MFOP:Primal} in the new framework.\\
We aim to find the dual problem of the following optimisation problem.
\begin{align*}
\inf_{\calF}\,&\int_{\R^n}\phi(x)\diff\calF(x)\\
\text{s.t. }&\int_{\R^n}\1_{\{x_i\leq t\}}(t,x)\diff\calF(x)=F_i(t),\quad(t\in\R,i\in\N_n)\\
& \int_{\R^n}\1_{z|_k\leq x|_k\leq y|_k}(z, y, x)(x_{k+1} - x_k)\diff\calF(x) = 0,\quad(y,z\in\R^n,k<n)\\
&\calF\geq 0. 
\end{align*}
Denote $\Phi = \R^n$ and $\Sigma = \left\{1, \dots, 2n - 1\right\}\times\R\times\R^n\times\R^n$.\\
We construct the function $b(k, t, z, y) = F_k(t)$ if $k\leq n$, and $b(k, t, z, y) = 0$ for $k > n$.\\
Further, for $k\leq n$ we have $B(k, t, z, y, x) = \1_{x_k\leq t}$. For $k = 1, \dots, n - 1$ we have
\begin{equation*}
B(k + n, t, z, y, x) = \1_{z|_k \leq x|_k\leq y|_k}(x_{k+1} - x_k)
\end{equation*}
Now we construct the dual problem.\\
As a objective function we have
\begin{equation*}
\int_{\Sigma}b(k, t, z, y)\diff\calS(k, t, z, y) = \sum_{k = 1}^n\int_{\R\times\R^n\times\R^n}F_k(t)\diff\bar\calS_k(t, z, y).
\end{equation*}
For $k = 1, \dots, n$ define the measures $\calS_k$ on Borel sigma-algebra $\mathcal B(\R)$ such that $\calS_k(D) = \bar \calS_k(D\times \R^n\times \R^n)$ for every $D\in\mathcal B(\R)$ .
As for the marginal problem, the signed measures $\calS_k$ being of finite variation, the functions 
$S_k(t)=\calS((-\infty,t])$ and the limits $s_k=\lim_{t\rightarrow
\infty}S_k(t)=\calS((-\infty,+\infty))$ are well defined and finite. 
Furthermore, using $\lim_{t\rightarrow-\infty}F_k(t)=0$ and \\$\lim_{t\rightarrow+\infty}F_k(t)=1$, we have 
\begin{align*}
\sum_{k=1}^n\int_{\R}F_k(t)\diff\calS_k(t)&=
\sum_{k=1}^n\left(F_k(t)S_k(t)|^{+\infty}_{-\infty}-\int_{\R}S_k(t)\diff F_k(t)
\right)\\
&=\sum_{k=1}^n s_k - \sum_{k=1}^n\int_{\R}S_k(t)\diff F_k(t)\\
&=\sum_{k=1}^n\int_{\R}(s_k-S_k(t))\diff F_k(t),
\end{align*}
and likewise for constraints. Writing $h_k(t) = s_k - S_k(t)$, $(k = 1, \dots, n)$, the objective function becomes
\begin{equation*}
\sum_{k=1}^n\int_{\R}h_k(t)\diff F_k(t),
\end{equation*}
and the left hand side of the constraints become
\begin{align*}
\int_{\Sigma}B(k, t, z, y, x)&\diff\calS(k, t, z, y) \\
&= \sum_{k = 1}^n\int_{\R\times\R^n\times\R^n}\1_{x_k\leq t}\diff\bar\calS_k(t, z, y)\\
 &+ \sum_{k = 1}^{n-1}\int_{\R\times\R^n\times\R^n}\1_{z|_k \leq x|_k\leq y|_k}(x_{k+1} - x_k)\diff\bar\calS_{n + k}(t, z, y)\\
 &=\sum_{k=1}^n h_k(x_k) + \sum_{k = 1}^{n - 1}g_k(x_1, \dots, x_k)(x_{k+1} - x_k),
\end{align*}
where $g_k(x_1, \dots, x_k) = \int_{\R^n\times\R^n}\1_{z|_k \leq x|_k\leq y|_k}(x_{k+1} - x_k)\diff\calS_{k + n}(z, y)$, $k = 1, \dots, n - 1$. Measures $\calS_{n + k}$, $k = 1, \dots, n - 1$ are defined such that $\calS_{k + n}(D) = \bar \calS_{k + n}(\R\times D)$ for every $D\in\mathcal B(\R^n\times\R^n)$.\\ 
We obviously have that the functions $g_k$, $k = 1, \dots, n - 1$ are bounded. So we get the following dual formulation of our primal problem.
\begin{align}\label{eq:firstDualFormulationOfMDOpPricing}
\sup_{h_1,\dots,h_n}\,&\sum_{k=1}^n\int_{\R}h_k(\tau)\diff F_k(\tau)\\
\text{s.t. }&\sum_{k=0}^n h_k(x_k) + \sum_{k = 1}^{n - 1}g_k(x_1, \dots, x_k)(x_{k+1} - x_k)\leq h(x),\quad
(x\in\R^n).\nonumber
\end{align}
Since the decision variables $g_1, \dots, g_{n-1}$ have zero coefficient in the objective function, we get that the dual problem \eqref{eq:firstDualFormulationOfMDOpPricing} is equivalent to
\begin{equation}\label{dualBeiglbock}
\sup\left\{\sum_{k = 1}^n\mathbb E_{F_k}\left[h_k\right] : \exists g_1, \dots g_{n-1} \text{ s.t. } \Psi_{(h_k),(g_j)}\leq\phi\right\},
\end{equation}
where 
\begin{equation*}
\Psi_{(h_k),(g_j)}(x_1, \dots, x_n) = \sum_{k=0}^n h_k(x_k) + \sum_{k = 1}^{n - 1}g_k(x_1, \dots, x_k)(x_{k+1} - x_k)
\end{equation*}
and $\mathbb E_{F_k}(h_k(t)) = \int_{\R}h(t)\diff F_k(t)$.\\ 

Note that (\ref{dualBeiglbock}) is exactly the same dual as  \eqref{MFOP:Dual}.

\subsection{Constraints via Bounds on Integrals}\label{modelGeneral}
In this section we present a particular case of the general duality result discussed at the beginning of this section, specifically, where we have finitely many constraints. In the next section we show how to find no-arbitrage bounds of the options with piece-wise linear payoffs using this model.\\
Let us assume that $\Phi$ is decomposed into a partition $\Phi=\bigcup_{i=1}^k\Xi_i$ of polyhedra $\Xi_i$ with nonempty interior. Each polyhedron has a primal description in terms of generators, 
\begin{equation*}
\Xi_i=\conv(q^i_1,\dots,q^i_{n_i})+\cone(r^i_1,\dots,r^i_{o_i})
\end{equation*}
where $\conv(q^i_1,\dots,q^i_{n_i})$ is the polytope with vertices $q^i_n\in\R^n$, and 
\begin{equation*}
\cone(r^i_1,\dots,r^i_{o_i})=\left\{\sum_{m=1}^{o_i}\xi_m r^i_m:\,
\xi_m\geq 0\;(m\in\N_{o_i})\right\} 
\end{equation*}
is the polyhedral cone with recession directions $r^i_m\in\R^n$. Each polyhedron also has a dual description 
in terms of linear inequalities, 
\begin{equation*}
\Xi_i=\bigcap_{j=1}^{k_i}\left\{x\in\R^n:\, \langle f^i_{j}, x\rangle\geq \ell^i_j\right\}, 
\end{equation*}
for some vectors $f^i_j\in\R^n$ and bounds $\ell^i_j\in\R$. The main case of interest is where $\Xi_i$ is either a finite or infinite box in $\R^n$ with faces parallel to the coordinate axes, or an intersection of such a box with a linear half space, in which case it is easy to pass between the primal and dual descriptions. Note 
however that the dual description is preferable, as the description of a box in $\R^n$ requires only $2n$ linear inequalities, while the primal description requires $2^n$ extreme vertices.\\
Consider the problem
\begin{align}\label{eq:ConstraintsViaBoundsOnIntegralsPrimal}
\sup_{\calF\in\cM_{\fF}}\,&\int_{\Phi}h(x)\diff \calF(x)\\
\text{s.t. }&\int_{\Phi}\phi_{s}(x)\diff\calF(x)\leq a_{s},\quad(s=1,\dots,M),\nonumber\\
\text{s.t. }&\int_{\Phi}\psi_{t}(x)\diff\calF(x)= b_{t},\quad(t=1,\dots,N),\nonumber\\
&\int_{\Phi}1\diff\calF(x)=1,\nonumber\\
&\calF\geq 0,\nonumber
\end{align}
where the test functions $\psi_t$ are piecewise linear on the partition $\Phi=\bigcup_{i=1}^k\Xi_i$, and where $-h(x)$ and the test functions $\phi_s$ are piecewise linear on the infinite polyhedra of the partition, and either jointly linear, concave, or convex on the finite polyhedra (i.e., polytopes) of the partition. The dual of \eqref{eq:ConstraintsViaBoundsOnIntegralsPrimal} is
\begin{align}\label{eq:ConstraintsViaBoundsOnIntegralsDual}
\inf_{(y,z)\in\R^{M+N+1}}\,&\sum_{s=1}^{M}a_s y_s+\sum_{t=1}^{N}b_t z_t + z_0,\\
\text{s.t. }&\sum_{s=1}^{M}y_s\phi_s(x)+\sum_{t=1}^{N}z_t\psi_t(x)+z_0\1_{\Phi}(x)
-h(x)\geq 0,\; (x\in\Phi),\nonumber\\
&y\geq 0.\nonumber
\end{align}

Note that \eqref{eq:ConstraintsViaBoundsOnIntegralsPrimal} is a semi-infinite programming problem with infinitely many variables and finitely many constraints, while \eqref{eq:ConstraintsViaBoundsOnIntegralsDual} is a semi-infinite programming problem with finitely many variables and infinitely many constraints. However, the constraint of the problem \eqref{eq:ConstraintsViaBoundsOnIntegralsDual} can be rewritten as co-positivity requirements over the polyhedra $\Xi_i$, 
\begin{equation*}
\sum_{s=1}^{M}y_s\phi_s(x)+\sum_{t=1}^{N}z_t\psi_t(x)+z_0\1_{\Phi}(x)
-h(x)\geq 0,\quad(x\in\Xi_i), \quad(i=1,\dots,k). 
\end{equation*}
Next we will see how these co-positivity constraints can be handled numerically, often by relaxing all but finitely many constraints.\\
In what follows, we will use the notation
\begin{equation*}
\varphi_{y,z}(x)=\sum_{s=1}^{M}y_s\phi_s(x)+\sum_{t=1}^{N}z_t\psi_t(x)+z_0-h(x). 
\end{equation*}

\subsubsection{Piecewise Linear Test Functions}\label{piecewise linear}

The first case we discuss is when $\phi_s|_{\Xi_i}$ and $h|_{\Xi_i}$ are jointly linear. Since we 
furthermore assumed that the functions $\psi_t|_{\Xi_i}$ are linear, there exist vectors $v^i_s\in\R^n$, 
$w^i_t\in\R^n$, $g^i\in\R^n$ and constants $c^i_s\in\R$, $d^i_t\in\R$ and $e^i\in\R$ such that 
\begin{align*}
\phi_s|_{\Xi_i}(x)&=\langle v^i_s, x\rangle+c^i_s,\\
\psi_t|_{\Xi_i}(x)&=\langle w^i_t, x\rangle+d^i_t,\\
h|_{\Xi_i}(x)&=\langle g^i, x\rangle+e^i. 
\end{align*}
The copositivity condition 
\begin{equation*}
\sum_{s=1}^{M}y_s\phi_s(x)+\sum_{t=1}^{N}z_t\psi_t(x)+z_0\1_{\Phi}(x)
-h(x)\geq 0,\quad(x\in\Xi_i) 
\end{equation*}
can then be written as 
\begin{multline*}
\langle f^i_j, x\rangle \geq \ell^i_j,\quad (j=1,\dots,k_i)\Longrightarrow\\
\left\langle\sum_{s=1}^{M}y_s v^i_s+\sum_{t=1}^{N}z_t w^i_t-g^i\;,\; x\right\rangle\geq e^i-\sum_{s=1}^{M}y_s c^i_s - \sum_{t=1}^{N}z_t d^i_t - z_0. 
\end{multline*}
By Farkas' Lemma, this is equivalent to the constraints 
\begin{align}
\sum_{s=1}^{M}y_s v^i_s+\sum_{t=1}^{N}z_t w^i_t-g^i&=\sum_{j=1}^{k_i}\lambda^i_j f^i_j,\label{c1}\\
e^i-\sum_{s=1}^{M}y_s c^i_s - \sum_{t=1}^{N}z_t d^i_t - z_0&\leq\sum_{j=1}^{k_i}\lambda^i_j \ell^i_j,\label{c2}\\
\lambda^i_j&\geq 0,\quad(j=1,\dots,k_i),\label{c3}
\end{align}
where $\lambda^i_j$ are additional auxiliary decision variables.\\
Thus, if all test functions are linear on all polyhedral pieces $\Xi_i$, then the dual (D) can be solved as a linear programming problem with $M+N+1+\sum_{i=1}^k k_i$ variables and $k(n+1)$ linear constraints, 
plus bound constraints on $y$ and the $\lambda^i_j$. More generally, if some but not all polyhedra correspond to jointly linear test function pieces, then jointly linear pieces can be treated as discussed above, while other pieces can be treated as discussed below. 
Details of effective solution methods of the above constructed linear optimisation problem can be found in \cite{hauser}.\\
In the next section we show that the problem of model-free option pricing with piece-wise linear payoffs can be approximated in the framework of this model. The numerical experiments we present in Section \ref{numericalExperiments} show that in some cases the model solves the problem exactly.

\section{Numerical Implementation}
In practice we are not given European option prices for every strike, hence we cannot have exact knowledge of marginal distributions. Moreover, for some maturities the number of traded liquid products can be so small that any interpolation of option prices will contain additional errors. To avoid introducing additional estimation errors in the model, and to use only the information given by the market, we replace constraints for the marginal distributions by constraints which represent calculation of vanilla option values.\\
Taking into consideration the above observations, we construct the following problem
\begin{align}
\quad\sup_{\calF}\,&\int_{\R^n}h(x)\diff\calF(x)\nonumber\\
\text{s.t. }&\int_{\R^n}(x_i - k_i^j)^+\diff\calF(s)\leq \bar C_i^j,\quad i = 1\dots, n,\quad j = 1, \dots, m_j\nonumber\\
&\int_{\R^n}(x_i - k_i^j)^+\diff\calF(s)\geq \underline C_i^j,\quad i = 1\dots, n,\quad j = 1, \dots, m_j\nonumber\\
&\int_{\R^n}\phi_i(s)\diff\calF(s) \leq \bar \delta_i,\quad i = 1, \dots, N_{\phi}\nonumber\\
&\int_{\R^n}\phi_i(s)\diff\calF(s) \geq \underline \delta_i,\quad i = 1, \dots, N_{\phi}\nonumber\\
& \int_{\R^n}\1_{z\leq s|_{k}\leq y}(z, y, s)(x_{k+1} - x_k)\diff\calF(s) = 0,\quad(y,z\in\R^k,k<n)\\
&\int_{\R^n}\1\diff\calF(s) = 1,\nonumber\\
&\calF\geq 0,\nonumber
\end{align}
where $(k_i^j)_{j=1}^{m_i}$ are the strikes of available in the market European call options, $(\bar C_i^j)_{j=1}^{m_i}$ and $(\underline C_i^j)_{j=1}^{m_i}$ are their respective market bounds of prices (bid-ask prices) and $m_i$ is the number of such options.\\
Except of call or put options we might observe other options in the market. Let them have payoff functions $\phi_i(s)$ and bid-ask prices $\underline \delta_i$ and $\bar\delta_i$, $i = 1, \dots, N_{\phi}$. For each of these additional observed options we added $2$ more inequality constraints.\\
Note that the constraints representing call option prices fit our model (the functions are piecewise linear). To render the martingale constraints compatible with our model, we approximate them by requiring that \eqref{eq:martingaleEquivalent} hold for only finitely many boxes in $\R^n$ of our choosing.\\
We impose a lower and upper bound on the stock price $X_i$ for each $t_i$, $i = 1, \dots, n$. $0$ will represent a natural lower limit, and for the upper limit, we can take a large enough value that the stock price is statistically unlikely to exceed. We also make sure the strikes are included in the set of points in our discretisation.\\
Let $0 = d_i^1 < d_i^2 < \dots < d_i^{n_i}$ be the set of discretisation points for $X_i$. Denote $\Phi:=\left\{s:d_i^1 \leq x_i\leq d_i^{n_i}, (i = 1, \dots, n)\right\}$.\\
Then, the martingale constraint for $k = 1$ is
\begin{equation*}
\int_{\R^n}\1_{\{x\in\R^n:d_i^{j_i}\leq x_i\leq d_i^{\ell_i},(i = 1, \dots, k)\}}(s)(x_{k + 1} - x_k)\diff\calF(s) = 0, \quad(1\leq j_i\leq\ell_i\leq n_i, \forall i).
\end{equation*}
For each $k$ we thus have $M_k := \prod_{i=1}^k (n_i - 1)$ equality constraints, which are all represented as integral inequalities with piecewise linear functions.\\
We saw that after discretisation all our constraints fit the model described in Section \ref{modelGeneral}. Further, if the payoff functions $h(s)$ and $\phi_i(s), \quad i = 1, \dots, N_{\phi}$ are piecewise linear (put/call or binary barrier options, Asian options, lookback options etc.), then we can solve the dual problem as described in Section \ref{modelGeneral}. If the payoff function $h(s)$ is concave (as for options on realised variance), then we can find the upper bound with our method, and if it is convex, we find the lower bound. For all other cases we have to make approximations, for example, first order Taylor approximation.\\
Now we can write the dual problem as follows:
\begin{eqnarray*}
\inf_{\underline y, \bar y, \underline z, \bar z, t, w}&&\sum_{i = 1}^{n}\sum_{j=1}^{n_i}(\bar y_i^j \bar C_i^j-\underline y_i^j \underline C_i^j) + \sum_{i=1}^{M_{\phi}}(\bar z_i\bar\delta_i - \underline z_i\underline\delta_i) + w\\
\text{s.t. }&&\sum_{i = 1}^{n}\sum_{j=1}^{n_i}(\bar y_i^j-\underline y_i^j)(x_i - K_i^j)^+\sum_{i=1}^{M_{\phi}}(\bar z_i - \underline z_i)\phi_i(s)+\\
&&\sum_{k=1}^{n-1}\sum_{\tau\in I_k}\1_{p_\tau\leq s|_k\leq p_{\tau+1}}(s)(x_{k+1} - x_k) + w\geq h(s),\quad \forall s\in\Phi.
\end{eqnarray*}

Interested readers can find efficient algorithms to solve the above described optimisation problems in the DPhil thesis of the second author \cite{sergey}. Since the size of the linear optimisation problems grows rapidly with the dimension, memory cost becomes an important issue. For this situation it is useful to implement the delayed column generation method of the simplex algorithm. Instead of creating the whole matrix of the problem, the delayed column generation algorithm gradually adds columns to the tableau while the simplex algorithm takes iterates. In practice, one does not need to consider all the columns in order to achieve the optimal solution to a desired accuracy. Moreover, since a large number of iterations works on just a small number of columns, the method gains a great advantage in saving time by not using the secondary memory.\\
Arbitrage-free pricing is often used as a means to identify arbitrage opportunities. Suppose some path-dependent option with payoff $h(s)$ is traded in the market with price $h^M$. By implementing our model for this option we find no-arbitrage bounds $[L, U]$. If $h^M\notin [L, U]$, then there is an arbitrage in the market (the dual problem also provides with the strategy to make risk-free money in this case, but we will not go into details in this work, interested reader can refer to \cite{beig}). Hence, often we are interested in finding out whether, for a given $h^M$, $L < h^M < U$. For this particular purpose, the detailed description of the implementation of Nesterov's algorithm to linear optimisation problem can be found in \cite{sergey}.\\
\begin{remark}
Note that one can introduce additional constraints into the model depending on additional assumptions one wishes to make on the behaviour of the stock. For example, it is easy to include lower and/or upper bound on the volatility of the stock.
\end{remark}

\section{Payoffs as Piece-wise Linear Functions}
In this section we describe different payoffs that are piece-wise linear functions. Obviously, call and put options fit to this category.
\subsection{Barrier Options}
For given lower and upper barriers $B_1<B_2$ (if the option is defined for one barrier only, e.g. down and out, up and in, then we suppose the other barrier is $-\infty$ or $\infty$ depending on the context), barrier options have payoff
\begin{equation*}
h(s) = H(X)\prod_{i = 1}^n\1_{B_1\leq X_i \leq B_2},
\end{equation*}
where $H(X)$ is the final payoff. For digital barrier options we have $H(S) = 1, \quad \forall S\in\R^n$ and for call and put barrier options with strike $K$ we have $H(S) = (X_n - K)^+$ and $H(S) = (K - X_n)^+$ respectively.\\
It is obvious that these options have piece-wise linear payoffs.

\subsection{Lookback and Asian Options}
Lookback options have payoffs that depend on the $X_{max} = \max\{X_1, \dots, X_n\}$ and/or $X_{min} = \min\{X_1, \dots, X_n\}$. Asian options are defined as having payoff functions at maturity which depend on the average value $X_A = \frac{1}{n}\sum_{i=1}^nX_i$ of the stock price over the path. 
 Table \ref{fig:OptionPayoofs} describes payoffs for different lookback and Asian options.
\begin{table}[htbp]
\caption{Lookback and Asian Option Payoffs}
\begin{center}
\begin{tabular}{|c|c|c|}
\hline
 & \multicolumn{1}{l|}{Call} & \multicolumn{1}{l|}{Put} \\ \hline
Lookback Fixed & $\max(X_{max} - K, 0)$ & $\max(K - X_{min}, 0)$ \\ \hline
Lookback Floating & $X_n - X_{min}$ & $X_{max} - X_n$\\ \hline
Asian Fixed & $\max(X_A - K, 0)$ & $\max(K - X_A, 0)$ \\ \hline
Asian Floating & $X_n - X_A$ & $X_A - X_n$ \\ \hline
\end{tabular}
\end{center}
\label{fig:OptionPayoofs}
\end{table}
The following lemma proves that all payoffs described in the Table \ref{fig:OptionPayoofs} are piece-wise linear functions on $\Phi$.
\begin{lemma}
Let $\Phi$ be a polytope in $\R^n$ and be further divided into finitely many polytopes $\Phi=(\bigcup_{i=1}^M\Pi_i)$. Let function $f:\Phi\rightarrow \R$ and $g : \Phi\rightarrow \R$ be piece-wise linear on these polytopes. Then, the functions $\max\{f(x), g(x)\}$ and $\min\{f(x), g(x)\}$ are also piece-wise linear on some finite partitioning of $\Phi = (\bigcup_{i=1}^{\bar M}\bar\Pi_i)$.
\end{lemma}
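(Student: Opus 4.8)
The plan is to reduce everything to a single affine comparison on each polytope of the given partition and then refine that partition by cutting each piece with one hyperplane. First I would record the hypotheses concretely: since $f$ and $g$ are piecewise linear on the common partition $\Phi=\bigcup_{i=1}^M\Pi_i$, for each $i$ there exist $a_i,c_i\in\R^n$ and $b_i,d_i\in\R$ with $f|_{\Pi_i}(x)=\langle a_i,x\rangle+b_i$ and $g|_{\Pi_i}(x)=\langle c_i,x\rangle+d_i$. The difference $(f-g)|_{\Pi_i}(x)=\langle a_i-c_i,x\rangle+(b_i-d_i)$ is then an affine function on $\Pi_i$, and the decision between $\max$ and $\min$ picking out $f$ or $g$ is governed entirely by its sign.

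Next I would split each $\Pi_i$ along the zero set of this affine difference. Define $\Pi_i^{+}=\Pi_i\cap\{x:\langle a_i-c_i,x\rangle+(b_i-d_i)\geq 0\}$ and $\Pi_i^{-}=\Pi_i\cap\{x:\langle a_i-c_i,x\rangle+(b_i-d_i)\leq 0\}$. Each is the intersection of the polytope $\Pi_i$ (a bounded finite intersection of half-spaces) with one further closed half-space, hence again a polytope; in degenerate cases one of $\Pi_i^{\pm}$ may be empty or lower-dimensional, which causes no difficulty since empty pieces are simply discarded. On $\Pi_i^{+}$ we have $\max\{f,g\}=f$ and on $\Pi_i^{-}$ we have $\max\{f,g\}=g$, each affine, and symmetrically $\min$ equals $g$ on $\Pi_i^{+}$ and $f$ on $\Pi_i^{-}$. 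The collection $\{\Pi_i^{+},\Pi_i^{-}\}_{i=1}^M$ is finite, covers $\Phi$ because $\Pi_i=\Pi_i^{+}\cup\Pi_i^{-}$ for each $i$, and consists of polytopes; relabelling it as $\{\bar\Pi_i\}_{i=1}^{\bar M}$ with $\bar M\leq 2M$ produces the required refined partition on which both $\max\{f,g\}$ and $\min\{f,g\}$ are affine piecewise.

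The one place that genuinely needs care, and which I expect to be the main obstacle, is verifying consistency on the overlaps of the new pieces, i.e. making precise what ``piecewise linear on a partition into polytopes'' means when the polytopes meet along shared faces. There are two kinds of overlap. Along the cut $\{f=g\}\cap\Pi_i$ separating $\Pi_i^{+}$ from $\Pi_i^{-}$, the two affine formulas agree because $f=g$ there, so $\max\{f,g\}$ (and $\min\{f,g\}$) is unambiguously defined. Along boundaries inherited from the original partition between neighbouring $\Pi_i$ and $\Pi_j$, agreement is inherited from the fact that $f$ and $g$ were already well defined there. Once this bookkeeping is settled, the argument is elementary polyhedral geometry with no appeal to Farkas' Lemma or any deeper machinery: the crux is simply isolating the affine difference $f-g$ on each $\Pi_i$, after which the hyperplane cut does all the work.
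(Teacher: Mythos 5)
Your argument is correct, and in fact the paper states this lemma without any proof at all, so there is nothing to compare it against; your write-up supplies exactly the standard argument one would expect the authors to have had in mind. The key step --- restricting to a single polytope $\Pi_i$ of the common partition, observing that $(f-g)|_{\Pi_i}$ is affine, and cutting $\Pi_i$ by the closed half-spaces $\{f-g\geq 0\}$ and $\{f-g\leq 0\}$ into at most two sub-polytopes on each of which $\max\{f,g\}$ and $\min\{f,g\}$ coincide with one of the two affine pieces --- is sound, the resulting collection of at most $2M$ polytopes covers $\Phi$, and your consistency check on the cut locus (where $f=g$, so the two formulas agree) and on the inherited faces is the right bookkeeping. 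The only cosmetic point worth noting is the fully degenerate case $f\equiv g$ on some $\Pi_i$, where $\Pi_i^{+}=\Pi_i^{-}=\Pi_i$ and one simply keeps a single copy; this does not affect the conclusion.
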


\section{Numerical Experiments}\label{numericalExperiments}
In this section we present numerical results for barrier options. We consider $2$ and $3$ dimensional cases. Numerical solutions of the linear programming problem are carried out with interior point methods using SeDuMi and SDPT3 solvers on MATLAB. For higher-dimensional or finer discretisation problems it is better to use Nesterov's smoothening technique for non-smooth functions, \cite{nesterov2, sergey}.\\
First we obtain our numerical results for simulated market data, i.e. we compute call option prices with Black-Scholes formulae for different time horizons. Numerical experiments on simulated market data is of particular interest because in this case we can also compute the true prices of our objective derivative (i.e. the one that we aim to find upper and lower bounds on). Hence, we can compare the price of the derivative in Black-Scholes model with the lower and upper bounds obtained by our model.\\
\begin{table}[htbp]
\begin{center}
\begin{tabular}{|c|c|c|c|}
\hline
 & \multicolumn{1}{l|}{Lower} & \multicolumn{1}{l|}{Upper}& \multicolumn{1}{l|} {Black-Scholes}\\ \hline
Double no-touch digital & 0.282622 & 0.612447 & 0.4232\\ \hline
Double no-touch call & 0 & 0.527483 &  0.202\\ \hline
Variance Swap & 0 & 0.187674 & 0.0919\\ \hline
\end{tabular}
\end{center}
\caption{Lower and Upper Bounds on 2-time step barrier options compared to Black Scholes value.}
\label{fig:2dBarrierOptions}
\end{table}
\begin{table}[htbp]
\begin{center}
\begin{tabular}{|c|c|c|c|}
\hline
& \multicolumn{1}{l|}{Lower} & \multicolumn{1}{l|}{Upper}& \multicolumn{1}{l|}{Black-Scholes} \\ \hline
Double no-touch digital & 0.0610184 & 0.533453 & 0.2732\\ \hline
Double no-touch call & 0 & 0.43506 & 0.1\\ \hline
Variance Swap & 0 & 0.326711 &0.1369 \\ \hline
\end{tabular}
\end{center}
\caption{Lower and Upper Bounds on 3-time step barrier options compared to Black Scholes value.}
\label{fig:3dBarrierOptions}
\end{table}
\begin{table}[htbp]
\begin{center}
\begin{tabular}{|c|c|c|c|}
\hline
& \multicolumn{1}{l|}{Barrier Call} & \multicolumn{1}{l|}{Barrier Digital}& \multicolumn{1}{l|}{Variance Swap} \\ \hline
2 & 0.554805	& 0.90002 &	1.21195\\ \hline
3 & 0.485545	& 0.855626 &	1.24816\\ \hline
\end{tabular}
\end{center}
\caption{Lower bounds on variance swap, double no-touch barrier digital and cal options for 2 and 3 time-steps. The results are obtained using market data for Apple stock. The stock price is 115 and the barriers are 95 and 125.}
\label{fig:marketData}
\end{table}
Tables \ref{fig:2dBarrierOptions} and \ref{fig:3dBarrierOptions} show numerical results for different barrier options for $2$ and $3$ time steps respectively. To obtain these results we only used market given strikes as discretisation points, i.e. for all $i = 1, \dots, n$ and $j = 1, \dots, n_i$, $d_i^j = k_i^j$.\\
We carried out the experiments for simulated option prices with spot price $50$, volatility $30\%$ and time horizons with time steps $6$ months. Strikes are chosen at all even numbers between $30$ and $60$ inclusive, while lower and upper bounds were chosen as $34$ and $56$ respectively.\\
In Table \ref{fig:marketData} we present lower bounds on double no-touch call, digital and variance swap options using market data for call options on Apple stock (APPL). The spot price is $115$ and the barriers considered are $95$ and $125$. We used call option prices for maturities $20/12/14$, $20/02/15$ and	$17/04/15$. The calculation date is $20$ November $2014$.\\
Note that the third option considered here is the variance swap option, which has neither piecewise linear nor piecewise concave payoff. In this case, we approximate the payoff function by a piecewise linear function with a linear restriction over each polytope.\\
After carrying out some numerical experiments, we observe that as long as prices of call options for the strikes that coincide with barriers of our options are available, any finer discretisation than the one described above leads to no further tightening of the bounds, that is, the solution of the optimisation problem remains the same. This observation, a proof of which is the subject of further research, implies that our method yields the tightest possible bounds that are compatible with the given market data. This is to say that for any other bound $l > L$ and $u<U$ one can construct a martingale measure for stock prices such that under this measure we get prices for our barrier options that are out of these new bounds $l$ and $u$, where $L$ and $U$ are the lower and upper bounds obtained by our model.\\
Taking into consideration the observation above, in the case when we don't have barriers as strikes in our market data, we should make some interpolation at the beginning to find approximate call option prices for the barriers as strikes and then construct our model. This approach is better than solving the problem by making a finer approximation.\\
\begin{remark}
Note that in practice information on call option prices provided by the market and the martingale constraint produce very wide bounds, so practically one will never observe an arbitrage, i.e. the price of the objective option will always be within the bounds. But as can be observed in the results that use call option prices obtained from Black-Scholes model, the real value of the option is approximately the median of the lower and upper bounds. 
\end{remark}
\section{Conclusions}
In this work we presented how the general duality results from \cite{hauser} can be applied to model-free path-dependent option pricing. We also introduced the so-called 'constraints via integrals of piece-wise concave functions' to solve the resulting semi-infinite dimensional optimisation problems numerically. Moreover, our numerical experiments provide strong evidence that the bounds obtained via our method are the tightest possible bounds compatible with the observed market prices and the no-arbitrage assumption.
\newpage

\bibliographystyle{plain}  

\end{document}